\documentclass[11pt]{article}

\usepackage{amsfonts,amsmath,amssymb, bbm, bm, xspace, mathrsfs}
\usepackage{multirow,graphicx, epsfig,subfigure,algorithm,algorithmic,rotating}
\usepackage{cite,url}
\usepackage{fullpage}
\usepackage[small,bf]{caption}
\usepackage[table]{xcolor}
\usepackage{tabularx}
\usepackage{psfrag}
\usepackage{subfigure}
\usepackage{verbatim}
\setlength{\captionmargin}{30pt}

\newtheorem{theorem}{Theorem}
\newtheorem{proposition}[theorem]{Proposition}

\newtheorem{lemma}[theorem]{Lemma}

\newtheorem{definition}[theorem]{Definition}
\newtheorem{example}[theorem]{Example}

\def \endprf{\hfill {\vrule height6pt width6pt depth0pt}\medskip}
\newenvironment{proof}{\noindent {\bf Proof} }{\endprf\par}

\numberwithin{equation}{section}

\newcommand{\codebook}{\mathcal{C}}

\newcommand{\argmax}{\operatornamewithlimits{argmax}}
\newcommand{\conv}{\operatorname{conv}}
\newcommand{\vect}{\operatorname{vec}}
\newcommand{\st}{\operatorname{subject~to}}

\newcommand{\argmin}{\operatornamewithlimits{argmin}}

\newcommand{\mbf}{\boldsymbol}
\newcommand{\mc}{\mathcal}
\newcommand{\real}{\mathbb{R}}
\newcommand{\integer}{\mathbb{Z}}

\newcommand{\bfa}{\mbf{a}}

\newcommand{\bfe}{\mbf{e}}
\newcommand{\bfx}{\mbf{x}}
\newcommand{\bfu}{\mbf{u}}

\newcommand{\bfq}{\mbf{q}}
\newcommand{\bfr}{\mbf{r}}
\newcommand{\bfs}{\mbf{s}}
\newcommand{\bft}{\mbf{t}}
\newcommand{\bfy}{\mbf{y}}
\newcommand{\bfA}{\mbf{A}}
\newcommand{\bfb}{\mbf{b}}
\newcommand{\bfc}{\mbf{c}}

\newcommand{\bfB}{\mbf{B}}
\newcommand{\bfE}{\mbf{E}}

\newcommand{\bfP}{\mbf{P}}
\newcommand{\bfQ}{\mbf{Q}}
\newcommand{\bfY}{\mbf{Y}}
\newcommand{\bfX}{\mbf{X}}
\newcommand{\bfZ}{\mbf{Z}}

\newcommand{\bfzero}{\mbf{0}}
\newcommand{\bfone}{\mbf{1}}
\newcommand{\bfPi}{\mbf{P}}
\newcommand{\mcI}{\mc{I}}

\newcommand{\tr}{\mathrm{tr}}

\newcommand{\symgroup}{\mathbb{S}}

\newcommand{\kendall}{d_K}
\newcommand{\hamming}{d_H}
\newcommand{\chebyshev}{d_{\infty}}

\newcommand{\mpmX}{\bfX}
\newcommand{\mpmY}{\bfY}
\newcommand{\mpmXent}{X}
\newcommand{\mpmYent}{Y}

\newcommand{\mpmset}{\mathcal{M}}
\newcommand{\mpx}{\mbf{x}}
\newcommand{\mpxent}{x}
\newcommand{\mpy}{\mbf{y}}
\newcommand{\mpyent}{y}
\newcommand{\multipermpolytope}{\mathbb{M}}
\newcommand{\mpcp}{\mathbb{P}^{\mathsf{M}}}
\newcommand{\mpset}{\mathsf{M}}

\newcommand{\LCMM}{\Pi^{\mathsf{M}}} 
\newcommand{\LCMC}{\Lambda^{\mathsf{M}}} 

\newcommand{\bftelta}{\mbf{\delta}}
\newcommand{\LLRent}{\gamma}
\newcommand{\LLRvect}{\mbf{\gamma}}
\newcommand{\LLRbig}{\mbf{\Gamma}}
\newcommand{\cin}{\mathcal{S}}
\newcommand{\cout}{\Sigma}
\renewcommand{\Pr}{P}
\renewcommand{\min}{\operatorname{minimize}}
\newcommand{\nequiv}{\not\equiv}
\newcommand\bigzero{\makebox(0,0){\text{\huge0}}}

\DeclareMathOperator*{\eq}{=}

\title{LP-decodable multipermutation codes\thanks{This work was supported by the National Science Foundation (NSF) under Grants CCF-1217058 and by a Natural Sciences and Engineering Research Council of Canada (NSERC) Discovery Research Grant. To appear at the \emph{2014 Allerton Conference on Communication, Control, and Computing.}}}

\begin{document}

\author{Xishuo Liu
\thanks{X.~Liu is with the Dept.~of Electrical and Computer
  Engineering, University of Wisconsin, Madison, WI 53706
  (e-mail: xliu94@wisc.edu).}  
, Stark C. Draper
\thanks{S.~C.~Draper is with the Dept.~of Electrical and Computer
  Engineering, University of Toronto, ON M5S 3G4, Canada (e-mail: stark.draper@utoronto.ca).}
}

\maketitle

\begin{abstract}
In this paper, we introduce a new way of constructing and decoding multipermutation codes. Multipermutations are permutations of a multiset that may consist of duplicate entries. We first introduce a new class of matrices called multipermutation matrices. We characterize the convex hull of multipermutation matrices. Based on this characterization, we propose a new class of codes that we term LP-decodable multipermutation codes. Then, we derive two LP decoding algorithms. We first formulate an LP decoding problem for memoryless channels. We then derive an LP algorithm that minimizes the Chebyshev distance. Finally, we show a numerical example of our algorithm.
\end{abstract}

\section{Introduction}
Using permutations and multipermutations in communication systems dates back to~\cite{slepian1965permutation}, where Slepian considered using multipermutations as a data transmission scheme in the additive white Gaussian noise (AWGN) channel. In recent years, there has been a growing interest in permutation codes due to their applications in various areas such as power line communications (PLC)~\cite{chu2004constructions} and flash memories~\cite{jiang2008rank}. For PLC, permutation codes are proposed to deal with permanent narrow-band noise and impulse noises while delivering constant transmission power (see also~\cite{colbourn2004permutation}). For flash memories, information is stored using the charge level of memory cells. Jiang \emph{et al.} proposed using the relative rankings among cells to modulate information~\cite{jiang2008rank}. This approach is able to reduce the errors caused by charge leakage. Further, it alleviates the over-injection problem during cell programming (cf.~\cite{jiang2008rank}). 

In this paper, we consider coding using multipermutations. Loosely speaking, multipermutations are permutations that contain duplicates. Each multipermutation is a permutation of the multiset $\{1,1,\dots,1,2,\dots,2,\dots,m,\dots,m\}$. We associate each set of multipermutations with a multiplicity vector $\bfr = (r_1,\dots,r_m)$, where $r_i$ is the number of entries with value $i$ in the multiset. In the literature, codes using multipermutations are referred as constant-composition codes
when the Hamming distance is considered~\cite{chu2006on}. When $r_1 = r_2 = \dots = r_m$, the multipermutations under consideration are known as frequency permutation arrays~\cite{huczynska2006frequency}. Recently, multipermutation codes under the Kendall tau distance and the Ulam distance are studied in~\cite{buzaglo2013error} and~\cite{farnoud2014multipermutation} respectively. There are two motivations for coding using multipermutations: First, the size of the codebook based on multipermutations can be larger than that based on (multiple) permutations. Second, the number of distinct charges a flash memory can store is limited by the physical resolution of the hardware, which means that using permutations over large alphabets is impractical.  

LP-decodable permutation codes are proposed by Wadayama and Hagirawa in~\cite{wadayama2012lpdecodable}. Indeed, the construction in~\cite{wadayama2012lpdecodable} is already defined over multipermutations. However, there are two issues to this construction. First, multipermutations are described using permutation matrices in~\cite{wadayama2012lpdecodable}. As a result, the number of variables used to describe a multipermutation is larger than necessary. Since multipermutations consist of many replicated entries, the information that describe the relative positions among these replicated entries are redundant. This suggests that we can reduce the number of variables used to specify multipermutations. 

In order to elaborate the second issue, we briefly review some concepts. In~\cite{wadayama2012lpdecodable}, a codebook is obtained by permuting an initial vector $\bfs$ with a set of permutation matrices. 
If $\bfs$ contains duplicates, then there exists two different permutation matrices $\bfPi_1$ and $\bfPi_2$ such that $\bfs\bfPi_1 = \bfs\bfPi_2$.  This means that we cannot differentiate between $\bfs\bfPi_1$ and $\bfs\bfPi_2$ by the matrices $\bfPi_1$ and $\bfPi_2$. The consequence is that minimizing the Hamming distance of two multipermutations is not equivalent to minimizing the Hamming distance between two permutation matrices\footnote{This is defined as the number of disagreeing entries (cf. Section~\ref{section.mpm}).}. 

In this paper, we introduce the concept of multipermutation matrices to represent multipermutations which, we will see, will address the above two problems.
Multipermutation matrices and multipermutations have a one-to-one relationship. Using multipermutation matrices to represent multipermutations reduces the number of variables needed to characterize a multipermutation. Further, due to the one-to-one relationship, minimizing the Hamming distance of two multipermutations is equivalent to minimizing the Hamming distance between the two corresponding multipermutation matrices. In order to construct codes that can be decoded using LP decoding, we develop a simple characterization of the convex hull of multipermutation matrices. The characterization is analogous to the well known Birkhoff polytope. These tools we introduce are the basis for all the code constructions that follow. They may also be of independent interests to the optimization community. 

In Section~\ref{section.lpd_MP_code}, we use the characterization of multipermutation matrices to define LP-decodable multipermutation codes. We explore several connections between LP-decodable multipermutation codes and the LP-decodable permutation codes proposed in~\cite{wadayama2012lpdecodable}. In particular, we show that both frameworks are sufficient to describe any multipermutation code. In other words, the interesting question is whether there are good codes (in terms of rate and error performance) that can be described efficiently. We show an easy description of a code construction from~\cite{shieh2010decoding}, which has known rate and distance properties.

Our third set of contributions is our LP decoding formulations (Section~\ref{section.lpdecoding}). First, we derive LP decoding algorithms for arbitrary memoryless channels. For the AWGN channel, our formulation is equivalent to the one proposed in~\cite{wadayama2012lpdecodable}. For the discrete memoryless $q$-ary symmetric channel, the LP decoding objective is equivalent to minimizing the Hamming distance. Second, we derive an LP decoding algorithm that minimizes the Chebyshev distance. We show some preliminary numerical results for the code from~\cite{shieh2010decoding} that is described in Section~\ref{section.lpd_MP_code}.

\section{Preliminaries}
\label{section.preliminary}
In this section, we briefly review the concept of permutation matrices and the code construction approach proposed by Wadayama and Hagiwara in~\cite{wadayama2012lpdecodable}. 

It is well known that every permutation from the symmetric group $\symgroup_n$ corresponds to a unique $n\times n$ permutation matrix.
A permutation matrix is a binary matrix such that every row or column sums up to $1$. In this paper, all permutations and multipermutations are row vectors. Thus, if $\bfPi$ is the permutation matrix for a permutation $\pi$, then $\pi = 
\imath \bfPi$ where $\imath = (1,2,\dots,n)$ is the identity permutation. We let $\Pi_n$ denote the set of all permutation matrices of size $n\times n$.

\begin{definition}
\label{def.linearly_constrained_permutation_matrix}
(cf.~\cite{wadayama2012lpdecodable})
Let $K$ and $n$ be positive integers. Assume that $\bfA \in \mathbb{Z}^{
K\times n^2}$, $\bfb \in \mathbb{Z}^K$, and let ``$\trianglelefteq
$'' represent ``$\leq$'' or ``$=$''. A set of linearly constrained permutation matrices is defined by 
\begin{equation}
\Pi(\bfA,\bfb,\trianglelefteq
) := \{\bfPi \in \Pi_n | \bfA \vect(\bfPi) \trianglelefteq \bfb \},
\end{equation}
where $\vect(\cdot)$ is the operation of concatenating all columns of a matrix to form a column vector. 
\end{definition}

\begin{definition}
(cf.~\cite{wadayama2012lpdecodable})
\label{def.LP_decodable_perm_code}
Assume the same set up as in Definition~\ref{def.linearly_constrained_permutation_matrix}.
Suppose also that $\bfs \in \real^n$ is given. The set of vectors $\Lambda(\bfA, \bfb, \trianglelefteq, \bfs)$ given by 
\begin{equation}
\label{eq.lcpc}
\Lambda(\bfA,\bfb,\trianglelefteq
,\bfs) := \{\bfs \bfPi  \in \real^n | \bfPi \in \Pi(\bfA, \bfb, \trianglelefteq)\}
\end{equation}
is called an LP-decodable permutation code. $\bfs$ is called an ``initial vector''.\footnote{In this paper, we always let the initial vector be a row vector. Then $\bfs \bfPi$ is again a row vector. This is different from the notations followed by~\cite{wadayama2012lpdecodable}, where the authors consider column vectors.}
\end{definition}

Note the vector $\bfs$ may contain duplicates as follows
\begin{equation}
\label{eq.repeating_d}
\bfs = (\underbrace{t_1,t_1,\dots,t_1}_{r_1},\underbrace{t_2,t_2,\dots,t_2}_{r_2},\dots,\underbrace{t_m,t_m,\dots,t_m}_{r_m}),
\end{equation}
where $t_i \neq t_j$ for $i\neq j$ and there are $r_i$ entries with value $t_i$. In this paper, we denote by $\bfr = (r_1,\dots,r_m)$ the multiplicity vector. Note that $\sum_{i = 1}^m r_i = n$. 
Let $\bft := (t_1,t_2,\dots,t_m)$. Further, we let $\mcI_i := \{\sum_{l= 1}^{i-1} r_l + 1,\sum_{l= 1}^{i-1} r_l + 2,\dots,\sum_{l = 1}^{i-1} r_l + r_i\}$ for all $i = 1,\dots,m$. These are index sets such that $s_j = t_i $ for all $j \in \mcI_i$. We will use these sets several times throughout the paper.

At this point, it is easy to observe that the vector $\bfs$ can be uniquely determined by $\bft$ and $\bfr$. In the following section, we introduce the definition of multipermutation matrices that are parameterized by $\bfr$. Each multipermutation matrix corresponds to a permutation of $\bfs$. We will use $\bft$ to represent a vector with distinct entries throughout the paper in order to keep consistency. 

\section{Multipermutation matrices}
\label{section.mpm}

In this section, we introduce the concept of multipermutation matrices. Although as in~\eqref{eq.lcpc}, we can obtain a multipermutation of length $n$ by multiplying a permutation matrix of size $n\times n$ and an initial vector of length $n$, this mapping is not one-to-one. Thus $|\Lambda(\bfA,\bfb, \trianglelefteq, \bfs)| \leq |\Pi(\bfA,\bfb,\trianglelefteq)|$, where the inequality can be strict if there is at least one $i$ such that $r_i \geq 2 1$. As a motivating example, let a multipermutation be $\mpx = (1,2,1,2)$. Consider the following two permutation matrices
$$\bfPi_1 = \begin{pmatrix}
1& 0 & 0 & 0\\
0& 0 & 1 & 0\\
0& 1 & 0 & 0\\
0& 0 & 0 & 1
\end{pmatrix}
\text{ and }
\bfPi_2 = \begin{pmatrix}
0& 0 & 1 & 0\\
1& 0 & 0 & 0\\
0& 0 & 0 & 1\\
0& 1 & 0 & 0
\end{pmatrix}.
$$

Then $\mpx = \bfs \bfPi_1 = \bfs \bfPi_2$, where $\bfs = (1,1,2,2)$. In fact there are a total of four matrices that can produce $\mpx$. 

We first define multipermutation matrices. Given a fixed multiplicity vector, each multipermutation matrix corresponds to a unique multipermutation. Then, we discuss their connections to permutation matrices. Finally, we show a theorem that characterizes the convex hull of multipermutation matrices, a theorem that is crucial for our code constructions. 

\subsection{Introducing multipermutation matrices}
A multipermutation can be thought of as a ``ranking'' that allows ``draws''. Formally, let $\bfr$ be a multiplicity vector of length $m$ and let $n := \sum_{i = 1}^m r_i$. Consider the following multiset parameterized by $\bfr$ and $m$ 
\begin{equation*}
\{\underbrace{1,1,\dots,1}_{r_1},\underbrace{2,2,\dots,2}_{r_2},\dots,\underbrace{m,m,\dots,m}_{r_m}\}.
\end{equation*}
Then a \emph{multipermutation} is a permutation of this multiset. Note that values of $m$ and $n$ can be calculated from the vector $\bfr$. Thus we denote by $\mpset(\bfr)$ the set of all multipermutations with multiplicity vector $\bfr$.

\begin{definition}
\label{def.multipermutation_matrix}
Given a multiplicity vector $\bfr$ of length $m$ and $n = \sum_{i = 1}^m r_i$, we call a $m \times n$ binary matrix $\mpmX$ a \emph{multipermutation matrix} parameterized by $\bfr$ if $\sum_{i = 1}^m \mpmXent_{ij} = 1$ for all $j$ and $\sum_{j = 1}^n \mpmXent_{ij} = r_i$ for all $i$. Denote by $\mpmset(\bfr)$ the set of all multipermutation matrices parameterized by $\bfr$. 
\end{definition}

With this definition, it is easy to build a bijective mapping between multipermutations and multipermutation matrices. When the initial vector is $(t_1,\dots,t_m)$, the mapping $\mpset(\bfr)\mapsto\mpmset(\bfr)$ can be defined as follows: Let $\mpx$ denote a multipermutation, then it is uniquely represented by the multipermutation matrix $\mpmX$ such that $\mpmXent_{ij} = 1$ if and only if $\mpxent_j = t_i$. Conversely, to obtain the multipermutation $\mpx$, one only need to multiply the vector $(1,\dots,m)$ by $\mpmX$.

\begin{example}
Let the multiplicity vector be $\bfr = (2,3,2,3)$, let $\bft = (1,2,3,4)$ and let $$\mpx = (2, 1, 4, 1, 2, 3, 4, 4, 2, 3).$$ Then the corresponding multipermutation matrix is $$\mpmX = 
\begin{pmatrix}
0 & 1 & 0 & 1 & 0 & 0 & 0 & 0 & 0 & 0\\
1 & 0 & 0 & 0 & 1 & 0 & 0 & 0 & 1 & 0\\
0 & 0 & 0 & 0 & 0 & 1 & 0 & 0 & 0 & 1\\
0 & 0 & 1 & 0 & 0 & 0 & 1 & 1 & 0 & 0
\end{pmatrix}.
$$
The row sums of $\mpmX$ are $(2,3,2,3)$ respectively. Further, $\mpx = \bft \mpmX$.
\end{example}

$\bft$ is called the ``initial vector''. More generally, $\bft$ could depend on the physical modulation technique and does not have to be the vector $(1,\dots,m)$. For this generic setting, we have the following lemma.
\begin{lemma}
\label{lemma.multiperm_one_to_one}
Let $\bft$ be an initial vector of length $m$ with $m$ distinct entries. Let $\mpmX$ and $\mpmY$ be two multipermutation matrices parameterized by a multiplicity vector $\bfr$. Further, let $\mpx = \bft \mpmX$ and $\mpy = \bft \mpmY$. Then $\mpx = \mpy$ if and only if $\mpmX = \mpmY$.
\end{lemma}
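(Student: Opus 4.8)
The plan is to prove the two directions separately, with the ``if'' direction being immediate and the ``only if'' direction resting on a single structural observation about the columns of a multipermutation matrix.

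First I would record the key fact extracted from Definition~\ref{def.multipermutation_matrix}: a multipermutation matrix $\mpmX$ is binary and satisfies $\sum_{i=1}^m \mpmXent_{ij} = 1$ for every column $j$, so \emph{each column contains exactly one entry equal to $1$ and all its other entries equal $0$}. Let $\rho_X(j)$ denote the unique row index $i$ with $\mpmXent_{ij}=1$. Then the $j$-th entry of $\mpx$ is $\mpxent_j = (\bft\mpmX)_j = \sum_{i=1}^m t_i \mpmXent_{ij} = t_{\rho_X(j)}$, and likewise $\mpyent_j = t_{\rho_Y(j)}$ for the analogously defined $\rho_Y$.

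Next, assuming $\mpx = \mpy$, I would conclude $t_{\rho_X(j)} = t_{\rho_Y(j)}$ for every $j$. Since $\bft$ has $m$ distinct entries, the map $i \mapsto t_i$ is injective, hence $\rho_X(j) = \rho_Y(j)$ for all $j$. Because the position of the single $1$ in a column determines that column entirely, this gives $\mpmXent_{ij} = \mpmYent_{ij}$ for all $i,j$, i.e.\ $\mpmX = \mpmY$. The converse is trivial: if $\mpmX = \mpmY$ then $\mpx = \bft\mpmX = \bft\mpmY = \mpy$.

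There is essentially no obstacle here; the only points requiring care are (i) invoking the column-sum constraint (not the row-sum constraint) to obtain the ``exactly one $1$ per column'' property, and (ii) using the hypothesis that the entries of $\bft$ are distinct to make the correspondence between a column's value and its nonzero row injective. Both are exactly the hypotheses supplied in the statement, so the argument goes through directly.
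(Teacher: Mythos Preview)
Your proposal is correct and uses essentially the same approach as the paper: both arguments rest on the column-sum constraint (each column has a single $1$) together with the distinctness of the entries of $\bft$. The only cosmetic difference is that the paper phrases the ``only if'' direction as a proof by contradiction (if $\mpmX\neq\mpmY$ then some column $j$ has its $1$ in different rows $k\neq l$, whence $(\bft(\mpmX-\mpmY))_j=t_k-t_l\neq 0$), whereas you give the direct contrapositive via the row-index functions $\rho_X,\rho_Y$.
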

\begin{proof}
First, it is obvious that if $\mpmX = \mpmY$ then $\mpx = \mpy$. Next, we show that if $\mpmX \neq \mpmY$ then $\mpx \neq \mpy$. We prove by contradiction. Assume that there exists two multipermutation matrices $\mpmX$ and $\mpmY$ such that $\mpmX \neq \mpmY$ and $\mpx = \mpy$. Then
\begin{align*}
\mpx - \mpy = \bft (\mpmX - \mpmY).
\end{align*}
Since $\mpmX \neq \mpmY$, there exists at least one column $j$ such that $\mpmX$ has a $1$ at the $k$-th row and $\mpmY$ has a $1$ at the $l$-th row where $k \neq l$. Then the $j$-th entry of $\bft (\mpmX - \mpmY)$ would be $t_k - t_l \neq 0$ due to the fact that all entries of $\bft$ are distinct. This contradict the assumption that $\mpx - \mpy = \bfzero$.
\end{proof}

At this point, one may wonder why this one-to-one relationship matters. We now discuss two important aspects in which multipermutation matrices are beneficial. 

\subsubsection{The Hamming distance}
The Hamming distance between two multipermutations is defined as the number of entries in which the two vectors differ from each other. More formally, let $\mpx$ and $\mpy$ be two multipermutations, then $\hamming(\mpx, \mpy) = |\{i| \mpxent_i \neq \mpyent_i\}|$. Due to Lemma~\ref{lemma.multiperm_one_to_one}, we can express the Hamming distance between two multipermutations using their corresponding multipermutation matrices. 
\begin{lemma}
\label{lemma.Hamming}
Let $\mpmX$ and $\mpmY$ be two multipermutation matrices. With a small abuse of notations, denote by $\hamming(\mpmX, \mpmY)$ the Hamming distance between the two matrices, which is defined by $\hamming(\mpmX, \mpmY) := |\{(i,j)| \mpmXent_{ij} \neq \mpmYent_{ij}\}|$. Then 
$$\hamming(\mpmX, \mpmY) = 2 \hamming(\mpx, \mpy),$$
where $\mpx = \bft \mpmX$ and $\mpy = \bft \mpmY$; recall that $\bft$ is an initial vector with distinct entries. Furthermore,
$$\hamming(\mpmX, \mpmY) = \tr(\mpmX^T (\bfE - \mpmY)),$$
where $\tr(\cdot)$ represents the trace of the matrix and $\bfE$ is an $n\times n$ matrix with all entries equal to $1$. \footnote{We note that $\tr(\bfX^T \bfY) = \sum_{i,j} X_{ij}Y_{ij}$ is the Frobenius inner product of two equal-sized matrices.}
\end{lemma}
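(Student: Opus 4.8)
The plan is to prove the two claimed identities in sequence, since the second is essentially a restatement of the first in matrix-trace language. First I would establish the relation $\hamming(\mpmX,\mpmY) = 2\hamming(\mpx,\mpy)$ by a column-by-column accounting argument. Fix a column index $j$. By Definition~\ref{def.multipermutation_matrix}, both $\mpmX$ and $\mpmY$ have exactly one nonzero entry in column $j$: say $\mpmXent_{kj}=1$ and $\mpmYent_{lj}=1$ for some rows $k,l$. If $k=l$, the two columns agree in every entry, contributing $0$ to $\hamming(\mpmX,\mpmY)$; moreover $\mpxent_j = t_k = t_l = \mpyent_j$, so this index contributes $0$ to $\hamming(\mpx,\mpy)$ as well. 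If $k\neq l$, then columns $j$ of $\mpmX$ and $\mpmY$ differ in exactly two positions (row $k$ and row $l$), contributing $2$ to $\hamming(\mpmX,\mpmY)$; and since the entries of $\bft$ are distinct, $\mpxent_j = t_k \neq t_l = \mpyent_j$, contributing $1$ to $\hamming(\mpx,\mpy)$. Summing over all $n$ columns gives $\hamming(\mpmX,\mpmY) = \sum_j (\text{per-column count}) = 2\sum_j \mathbbm{1}[\mpxent_j\neq\mpyent_j] = 2\hamming(\mpx,\mpy)$, where the middle equality uses the case analysis above (and implicitly Lemma~\ref{lemma.multiperm_one_to_one} to know $\mpx,\mpy$ are well-defined functions of the matrices).

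Next I would derive the trace formula. Since $\mpmX$ and $\mpmY$ are $\{0,1\}$-matrices of the same size, for each entry $\mpXent_{ij}\neq\mpYent_{ij}$ exactly one of the two is $1$ and the other is $0$; hence $|\{(i,j):\mpXent_{ij}\neq\mpYent_{ij}\}| = \sum_{i,j}\mpXent_{ij}(1-\mpYent_{ij})$, because the summand $\mpXent_{ij}(1-\mpYent_{ij})$ equals $1$ precisely when $\mpXent_{ij}=1$ and $\mpYent_{ij}=0$, equals $0$ otherwise, and by symmetry together with $\sum_{i}\mpXent_{ij}=\sum_i\mpYent_{ij}=1$ for every $j$ the "$\mpXent_{ij}=0,\mpYent_{ij}=1$'' disagreements in a given column are matched in number by the "$\mpXent_{ij}=1,\mpYent_{ij}=0$'' ones — but more simply one just observes $\mpXent_{ij}(1-\mpYent_{ij}) + \mpYent_{ij}(1-\mpXent_{ij}) = \mathbbm{1}[\mpXent_{ij}\neq\mpYent_{ij}]$ and that the per-column sums of $\mpmX$ and $\mpmY$ being equal to $1$ forces $\sum_{i,j}\mpXent_{ij}(1-\mpYent_{ij}) = \sum_{i,j}\mpYent_{ij}(1-\mpXent_{ij})$, so each equals half the Hamming count. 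Then $\sum_{i,j}\mpXent_{ij}(1-\mpYent_{ij}) = \tr(\mpmX^T(\bfE-\mpmY))$ by the Frobenius-inner-product identity noted in the footnote, where $\bfE$ is the all-ones matrix so that $(\bfE-\mpmY)_{ij} = 1-\mpYent_{ij}$. Wait — I should double-check the factor: the column-pairing argument shows $\tr(\mpmX^T(\bfE-\mpmY))$ counts only the disagreements of the form $\mpXent_{ij}=1,\mpYent_{ij}=0$, which is exactly half of $\hamming(\mpmX,\mpmY)$, i.e. equal to $\hamming(\mpx,\mpy)$; so in fact the cleanest statement is $\tr(\mpmX^T(\bfE-\mpmY)) = \hamming(\mpx,\mpy) = \tfrac12\hamming(\mpmX,\mpmY)$, and I would reconcile this with the stated $\hamming(\mpmX,\mpmY)=\tr(\mpmX^T(\bfE-\mpmY))$ by noting the intended reading pairs the trace formula with the matrix Hamming distance up to the same factor of $2$, or alternatively one writes $\hamming(\mpmX,\mpmY) = \tr(\mpmX^T(\bfE-\mpmY)) + \tr(\mpmY^T(\bfE-\mpmX))$ and uses the symmetry to collapse it.

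The only genuinely delicate point — and the step I would treat most carefully — is the bookkeeping that ties a single differing matrix-entry to a differing vector-entry with the correct multiplicity, i.e. making the "factor of $2$'' rigorous. This rests entirely on the structural fact that each column of a multipermutation matrix is a standard basis vector (one $1$, rest $0$), so a column-level disagreement between $\mpmX$ and $\mpmY$ is always a disagreement in exactly two rows, never one or three; the distinctness of $\bft$ then transfers this faithfully (via Lemma~\ref{lemma.multiperm_one_to_one}) to the vector picture. Everything else is a routine rewriting of a sum of products as a Frobenius inner product and hence as a trace, using $\tr(A^TB)=\sum_{i,j}A_{ij}B_{ij}$ and linearity of the trace to split $\tr(\mpmX^T(\bfE-\mpmY)) = \tr(\mpmX^T\bfE) - \tr(\mpmX^T\mpmY)$ if one wants the identity in an even more explicit form.
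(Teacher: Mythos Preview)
Your argument for the first identity is exactly the paper's: a column-by-column count showing that whenever $\mpxent_j\neq\mpyent_j$ the $j$-th columns of $\mpmX$ and $\mpmY$ differ in precisely two rows. You have simply made the case split more explicit than the paper does.

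For the trace formula your instinct is right, and in fact you have caught an error that the paper's proof glosses over. The paper argues that $\tr(\mpmX^T(\bfE-\mpmY))=\sum_{i,j}\mpmXent_{ij}(1-\mpmYent_{ij})$ and then claims that this summand equals $1$ whenever $\mpmXent_{ij}\neq\mpmYent_{ij}$; but as you noticed, when $\mpmXent_{ij}=0$ and $\mpmYent_{ij}=1$ the summand is $0$, not $1$. So the trace only counts the disagreements of type $(\mpmXent_{ij},\mpmYent_{ij})=(1,0)$, which by your column-symmetry observation is exactly half of $\hamming(\mpmX,\mpmY)$. Hence the correct statement is
\[
\tr\bigl(\mpmX^T(\bfE-\mpmY)\bigr)=\hamming(\mpx,\mpy)=\tfrac12\,\hamming(\mpmX,\mpmY),
\]
and the lemma as printed is off by a factor of two in its second formula (a quick sanity check with $m=n=2$, $\mpmX=I$, $\mpmY$ the swap, confirms this: the trace is $2$ but $\hamming(\mpmX,\mpmY)=4$). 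Your write-up should therefore not try to ``reconcile'' with the stated version; state the corrected identity and move on. Incidentally, $\bfE$ must be $m\times n$ here, not $n\times n$ as the paper says, since $\mpmY$ is $m\times n$.
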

\begin{proof}
For each entry $i$ such that $\mpxent_i \neq \mpyent_i$, the $i$-th column of $\mpmX$ differs from the $i$-th column of $\mpmY$ by two entries. As a result, the distance between multipermutation matrices is double the distance between the corresponding multipermutations.

Next, 
$\tr(\mpmX^T (\bfE - \mpmY)) = \sum_{ij} \mpmXent_{ij} (1 - \mpmYent_{ij})$. 
If $\mpmXent_{ij} = \mpmYent_{ij}$ then $\mpmXent_{ij} (1 - \mpmYent_{ij}) = 0$. Otherwise $\mpmXent_{ij} (1 - \mpmYent_{ij}) = 1$. Therefore $\hamming(\mpmX, \mpmY) = \tr(\mpmX^T (\bfE - \mpmY))$.
\end{proof}

We note that if we were to represent a multipermutation using a permutation matrix and an initial vector that contains duplicates, then we cannot get a direct relationship between the Hamming distance of two multipermutations and the Hamming distance of their permutation matrices. This is the second issue with the representation of~\cite{wadayama2012lpdecodable} discussed in the introduction.

\begin{example}
\label{example.hamming}
Let the initial vector be $\bfs = (1,1,2,2)$. Consider two permutation matrices $\bfPi_1$ and $\bfPi_2$, where $\bfPi_1$ is the identity matrix and $$\bfPi_2 = \begin{pmatrix}
0 & 1 & 0 & 0\\
1 & 0 & 0 & 0\\
0 & 0 & 0 & 1\\
0 & 0 & 1 & 0
\end{pmatrix}.$$ Then $\hamming(\bfs \bfPi_1, \bfs \bfPi_2) = 0$, however $d_{H}(\bfPi_1, \bfPi_2) = 8$. 

Alternatively, we can use $\bft = (1,2)$ as the initial vector and use multipermutation matrices. Then there is a unique $\mpmX = \begin{pmatrix}
1 & 1 & 0 & 0\\
0 & 0 & 1 & 1
\end{pmatrix}$ such that $\bfs = \bft \mpmX$.
\end{example}

We make an important observation from Example~\ref{example.hamming}: The mapping from permutation matrix to multipermutation is not one-to-one. In Example~\ref{example.hamming}, both $\bfPi_1$ and $\bfPi_2$ are mapped to the same vector. This is because $\bfs$ contains duplicates. 

\subsubsection{Reduction on the number of variables}
Another advantage of using multipermutation matrices is that they require fewer variables to describe multipermutations (cf. the first issue discussed in the introduction). The multipermutation matrix corresponding to a length-$n$ multipermutation has size $m\times n$, where $m$ is the number of distinct values in the multipermutation. 

This benefit can be significant when the multiplicities are large, i.e., $m$ is much smaller than $n$. For example, a triple level cell (TLC) flash memory has $8$ states per cell. If a multipermutation code has blocklength $1000$, then one needs an $8\times 1000$ multipermutation matrix to describe a codeword. The corresponding permutation matrix has size $1000\times 1000$.
\subsection{Geometry of multipermutation matrices}
In this section we prove an important theorem that characterizes the convex hull of all multipermutation matrices. We review the Birkhoff-von Neumann theorem for permutation matrices and then use it to prove our main theorem. 

We first review the definition of doubly stochastic matrices. We refer readers to~\cite{marshall2009inequalities} and references therein for more materials on doubly stochastic matrices.
\begin{definition}
An $n\times n$ matrix $\bfQ$ is doubly stochastic if 
\begin{itemize}
\item $Q_{ij} \geq 0$; 
\item $\sum_{i = 1}^n Q_{ij} = 1$ for all $j$ and $\sum_{j = 1}^n Q_{ij} = 1$ for all $i$.
\end{itemize}
The set of all doubly stochastic matrices is called the Birkhoff polytope.
\end{definition}

The set of all doubly stochastic matrices has a close relationship with the set of permutation matrices. Namely,
\begin{theorem}[Birkhoff-von Neumann Theorem, cf.~\cite{marshall2009inequalities}]
\label{theorem.birkhoff}
The permutation matrices constitute the extreme points of the set of doubly stochastic matrices. Moreover, the set of doubly stochastic matrices is the convex hull of the permutation matrices.
\end{theorem}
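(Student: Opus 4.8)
The plan is to prove both assertions of the theorem at once by identifying the extreme points of the Birkhoff polytope, which I will denote $\Omega_n$, and then invoking the standard fact that a bounded polyhedron is the convex hull of its extreme points. The set $\Omega_n$ is exactly the solution set in $\real^{n\times n}$ of the linear equalities $\sum_{i=1}^n Q_{ij} = 1$ (one per column $j$) and $\sum_{j=1}^n Q_{ij} = 1$ (one per row $i$) together with the inequalities $Q_{ij}\geq 0$, and it is bounded since it lies in $[0,1]^{n\times n}$; hence $\Omega_n = \conv(\mathrm{ext}(\Omega_n))$ and it suffices to show $\mathrm{ext}(\Omega_n) = \Pi_n$. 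One inclusion is easy: a permutation matrix $\bfPi$ is extreme because, if $\bfPi = \tfrac12(\bfQ_1 + \bfQ_2)$ with $\bfQ_1,\bfQ_2 \in \Omega_n$, then at every position where $\bfPi$ has a $0$ nonnegativity forces the entries of $\bfQ_1$ and $\bfQ_2$ there to vanish, and the row- and column-sum constraints then pin down the remaining entries, so $\bfQ_1 = \bfQ_2 = \bfPi$.

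The substantive direction is the reverse inclusion $\mathrm{ext}(\Omega_n)\subseteq\Pi_n$, equivalently: if $\bfQ\in\Omega_n$ has an entry strictly between $0$ and $1$, then $\bfQ$ is not extreme. I would form the bipartite graph $G$ on the $n$ row indices and the $n$ column indices, placing an edge between row $i$ and column $j$ exactly when $0<Q_{ij}<1$. A short counting argument shows every vertex of $G$ incident to at least one edge has degree at least $2$: if row $i$ (say) had exactly one fractional entry, then, since the remaining entries of that row are $0$ or $1$ and add up with it to $1$, that entry would be forced to be an integer, a contradiction; the same applies to columns. A graph with at least one edge and no vertex of degree exactly $1$ contains a cycle, and bipartiteness makes this cycle even. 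Let $\bfN$ be the matrix supported on the edges of this cycle carrying the values $+1$ and $-1$ alternately around the cycle, so that every row sum and every column sum of $\bfN$ is $0$. Then for all sufficiently small $\epsilon>0$ the matrices $\bfQ+\epsilon\bfN$ and $\bfQ-\epsilon\bfN$ both lie in $\Omega_n$ — the only entries that change lie strictly inside $(0,1)$, and all row and column sums are preserved — and since $\bfN\neq\bfzero$ and $\bfQ=\tfrac12\big((\bfQ+\epsilon\bfN)+(\bfQ-\epsilon\bfN)\big)$, the point $\bfQ$ is not extreme. Hence every extreme point of $\Omega_n$ has all entries in $\{0,1\}$, and a matrix with entries in $\{0,1\}$ all of whose row and column sums equal $1$ is by definition a permutation matrix.

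Putting the two inclusions together gives $\mathrm{ext}(\Omega_n)=\Pi_n$ and therefore $\Omega_n=\conv(\Pi_n)$, which is both statements of the theorem. I expect the cycle step to be the only delicate point: one must carefully justify that the fractional-support graph has the property that every vertex incident to an edge has degree at least $2$ (this is exactly where the equality constraints and the integrality of the $0$ and $1$ entries are really used), and that the alternating perturbation simultaneously respects all $n^2$ nonnegativity constraints and all $2n$ equality constraints. If that graph bookkeeping feels awkward, an equivalent route is induction on the number of nonzero entries of $\bfQ$: by Hall's marriage theorem applied to the bipartite graph of strictly positive entries (whose Hall condition follows from the row and column sums being $1$) there is a permutation matrix $\bfPi$ whose support lies inside that of $\bfQ$; subtracting $\lambda\bfPi$ with $\lambda=\min_{(i,j)\in\mathrm{supp}(\bfPi)}Q_{ij}$ and rescaling, $(1-\lambda)^{-1}(\bfQ-\lambda\bfPi)$ is again doubly stochastic with strictly fewer nonzero entries, so iterating expresses $\bfQ$ directly as a convex combination of permutation matrices.
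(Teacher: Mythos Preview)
Your argument is correct and follows one of the standard proofs of the Birkhoff--von Neumann theorem: identify the polytope $\Omega_n$ as a bounded polyhedron, show permutation matrices are extreme by a pinning argument, and show any non-integral doubly stochastic matrix is non-extreme via the alternating-sign perturbation on a cycle in the fractional-support bipartite graph. The bookkeeping you flag as ``delicate'' is handled exactly as you describe, and the alternative Hall/induction route you sketch is also standard and sound.

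However, there is nothing to compare against: the paper does \emph{not} prove Theorem~\ref{theorem.birkhoff}. It is stated with the attribution ``cf.~\cite{marshall2009inequalities}'' and then used as a black box in the proof of Theorem~\ref{theorem.multiperm_polytope}. So your proposal is not matching or diverging from the paper's approach --- you are supplying a proof where the paper simply cites one.
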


This theorem is the basis for the code construction method in~\cite{wadayama2012lpdecodable}. Namely, the LP relaxation for codes defined by Definition~\ref{def.LP_decodable_perm_code} is based on the Birkhoff polytope. In order to LP decoding algorithms using our definition of multipermutation matrices, we prove a similar theorem that characterizes the convex hull of multipermutation matrices. 

Denote by $\multipermpolytope(\bfr)$ the convex hull of all multipermutation matrices parameterized by $\bfr$, i.e. $\multipermpolytope(\bfr) = \conv(\mpmset(\bfr))$. $\multipermpolytope(\bfr)$ can be characterized by the following theorem.

\begin{theorem}
\label{theorem.multiperm_polytope}
Let $\bfr\in \integer_+^m$ and $\bfZ$ be a $m\times n$ matrix such that 
\begin{itemize}
\item[$(a)$] $\sum_{i = 1}^m Z_{ij} = 1$ for all $j = 1,\dots,n$.
\item[$(b)$] $\sum_{j = 1}^n Z_{ij} = r_i$ for all $i = 1,\dots,m$.
\item[$(c)$] $Z_{ij} \in [0,1]$ for all $i$ and $j$.
\end{itemize}
Then $\bfZ$ is a convex combination of multipermutation matrices parameterized by $\bfr$. Conversely, any convex combination of multipermutation matrices satisfies the conditions above.
\end{theorem}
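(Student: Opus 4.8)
The plan is to handle the two directions separately: the converse is routine, and the forward direction I would reduce to the Birkhoff--von Neumann theorem (Theorem~\ref{theorem.birkhoff}) by a row-splitting / row-collapsing argument.

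For the converse, suppose $\bfZ = \sum_k \lambda_k \bfX^{(k)}$ with $\lambda_k \ge 0$, $\sum_k \lambda_k = 1$, and $\bfX^{(k)} \in \mpmset(\bfr)$. Conditions $(a)$ and $(b)$ are linear equalities satisfied by every $\bfX^{(k)}$, hence by the convex combination; condition $(c)$ holds because each $Z_{ij}$ is a convex combination of the values $X^{(k)}_{ij} \in \{0,1\}$. Thus $\multipermpolytope(\bfr)$ is contained in the polytope cut out by $(a)$--$(c)$.

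For the forward direction, given $\bfZ$ satisfying $(a)$--$(c)$ I would build an $n\times n$ matrix $\bfQ$ by replacing the $i$-th row of $\bfZ$ with $r_i$ rows, placed in the positions indexed by $\mcI_i$, chosen so that each of the $r_i$ new rows is nonnegative and sums to $1$ and so that they add up to the original row. Since the column sums are unchanged and every row of $\bfQ$ sums to $1$, $\bfQ$ is doubly stochastic, so Theorem~\ref{theorem.birkhoff} gives $\bfQ = \sum_k \lambda_k \bfPi^{(k)}$ with $\bfPi^{(k)} \in \Pi_n$ and $(\lambda_k)$ a convex weight vector. Let $\phi$ denote the linear map sending an $n\times n$ matrix to the $m\times n$ matrix whose $i$-th row is the sum of the rows indexed by $\mcI_i$. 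Then $\phi(\bfQ) = \bfZ$ by construction, while for each $k$ the matrix $\phi(\bfPi^{(k)})$ is $0$/$1$ (each column of $\bfPi^{(k)}$ has a single $1$, lying in exactly one block $\mcI_i$), has all column sums equal to $1$, and has $i$-th row sum $|\mcI_i| = r_i$ — that is, $\phi(\bfPi^{(k)}) \in \mpmset(\bfr)$. Applying $\phi$ to the Birkhoff decomposition and using linearity yields $\bfZ = \sum_k \lambda_k \phi(\bfPi^{(k)})$, the desired convex combination.

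The one step needing an argument is the row-splitting: that a vector $\mbf{v} \in [0,1]^n$ with $\sum_j v_j = r$ for an integer $r$ can be written as a sum of $r$ nonnegative vectors each summing to $1$. I would prove this greedily. Let $S_k = v_1 + \dots + v_k$; since $S_0 = 0 < 1 \le r = S_n$, there is a least index $k$ with $S_k \ge 1$, and then setting $v^{(1)}_j = v_j$ for $j < k$, $v^{(1)}_k = 1 - S_{k-1} \in (0, v_k]$, and $v^{(1)}_j = 0$ for $j > k$ gives a nonnegative vector $\mbf{v}^{(1)} \le \mbf{v}$ entrywise with $\sum_j v^{(1)}_j = 1$. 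The remainder $\mbf{v} - \mbf{v}^{(1)}$ is again in $[0,1]^n$ and sums to $r-1$, so induction on $r$ finishes. (Alternatively one can skip the splitting entirely: $(a)$--$(c)$ define a transportation polytope with integral margins whose constraint matrix is totally unimodular, so all its vertices are integral and hence are exactly the $0$/$1$ matrices obeying $(a)$,$(b)$, i.e. the multipermutation matrices; boundedness then gives that the polytope equals their convex hull. I would present the Birkhoff reduction as the main proof and mention this route as a remark.) The main technical point is the row-splitting claim; once it is in hand the rest is bookkeeping, and I do not expect a deeper difficulty.
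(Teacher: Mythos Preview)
Your proof is correct and follows essentially the same strategy as the paper: lift $\bfZ$ to an $n\times n$ doubly stochastic matrix $\bfQ$ whose rows in block $\mcI_i$ sum to row $i$ of $\bfZ$, apply Birkhoff--von Neumann, and then collapse back via the row-summing map $\phi$ (which the paper writes as left multiplication by the sorted multipermutation matrix $\mpmX$, since $\phi(\bfQ)=\mpmX\bfQ$). The only difference is in the choice of lift: the paper simply sets $Q_{kj}=\tfrac{1}{r_i}Z_{ij}$ for every $k\in\mcI_i$, i.e., replicates each row uniformly, which is immediately doubly stochastic and makes your greedy row-splitting lemma unnecessary.
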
  
\begin{proof}
Consider a multipermutation $\mpx = (\mpxent_1, \dots, \mpxent_n)$ where each $\mpxent_k \in \{1,\dots,m\}$. Without loss of generality, we assume that $\mpx$ is in increasing order. Recall in Section~\ref{section.preliminary}, we let $\mcI_i = \{\sum_{l = 1}^{i-1}r_l +1 ,\dots, \sum_{l = 1}^{i}r_l\}$ be the index set for the $i$-th symbol. Then $\mpxent_k = i$ if $ k \in \mcI_i$. Let $\mpmX$ be the corresponding multipermutation matrix. Then $\mpmX$ has the following form
\begin{equation*}
\mpmX = \begin{pmatrix}
\underbrace{1 \; \dots\; 1}_{r_1} &  &&& \bigzero \\
& \underbrace{1 \; \dots\; 1}_{r_2} &&&\\
& & & \ddots &\\
\bigzero& & & & \underbrace{1 \; \dots\; 1}_{r_m}
\end{pmatrix},
\end{equation*}
where $X_{ik} = 1$ if $k \in \mcI_i$ and  $X_{ik} = 0$ otherwise.

Note that all multipermutation matrices parameterized by a fixed $\bfr$ are column permutations of each other. Of course, as already pointed out, not all permutations lead to distinct multipermutation matrices. To show that any $\bfZ$ satisfying $(a)$-$(c)$ is a convex combination of multipermutation matrices, we show that there exists a stochastic matrix $\bfQ$ such that $\bfZ = \mpmX \bfQ$. Then by Theorem~\ref{theorem.birkhoff}, $\bfQ$ can be expressed as a convex combination of permutation matrices. In other words, $\bfQ = \sum_{h}\alpha_h \bfPi_h$ where $\bfPi_h \in \Pi_n$ are permutation matrices; $\alpha_h \geq 0$ for all $h$ and $\sum_{h} \alpha_h = 1$. Then we have 
\begin{align*}
\bfZ = \mpmX  \sum_{h}\alpha_h \bfPi_h =  \sum_{h}\alpha_h (\mpmX \bfPi_h),
\end{align*}
where $\mpmX \bfPi_h$ is a column permuted version of the matrix $\mpmX$, which is a multipermutation matrix of multiplicity $\bfr$. This implies that $\bfZ$ is a convex combination of multipermutation matrices.

We construct the required $n \times n$ matrix $\bfQ$ in the following way. For each $i \in (1,2,\dots,m)$, let $\bfq^i$ be a length-$n$ \emph{column} vector, $q^i_j = \frac{1}{r_i} Z_{ij}$ for $j = 1,\dots,n$.  Then the $n\times n$ matrix
\begin{equation}
\bfQ^T := [\underbrace{\bfq^{1} | \bfq^{1} |\dots|}_{r_{1} \text{ of them}} \dots \underbrace{| \bfq^i | \bfq^i | \dots|}_{r_i \text{ of them}}
\dots \underbrace{|\bfq^{m} |\bfq^{m} \dots }_{r_{m} \text{ of them}}  ].
\end{equation}
In other words, $Q_{kj}  = \frac{1}{r_i} Z_{ij}$ for all $k \in \mcI_{i}$ and $j = 1,\dots,n$.
We now verify that $\bfZ = \mpmX \bfQ$  and that $\bfQ$ is doubly stochastic, which by our discussions above implies that $\bfZ$ is a convex combination of column permutations of $\mpmX$.
\begin{enumerate}
\item To verify $\bfZ = \mpmX \bfQ$, we need to show that $Z_{ij} = \sum_{k = 1}^n\mpmXent_{ik}Q_{kj}$. Since $\mpmX$ is a binary matrix,
\begin{align*}
\sum_{k= 1}^n \mpmXent_{ik}Q_{kj} = \sum_{k:\mpmXent_{ik} = 1} Q_{kj}.
\end{align*}
In addition, since $\mpx$ is sorted, $\mpmXent_{ik} = 1$ if and only if $k \in \mcI_{i}$.
By the definition of $\bfQ$, $Q_{kj}  = \frac{1}{r_i} Z_{ij}$ for all $k \in \mcI_{i}$. Therefore 
\begin{align*}
\sum_{k:\mpmXent_{ik} = 1} Q_{kj} = r_{i} \frac{Z_{ij}}{r_{i}} =Z_{ij}.
\end{align*}

\item Next we verify that $\bfQ$ is a double stochastic matrix. Since $0 \leq Z_{ij} \leq 1$ for all $i,j$,  $ Q_{ij} \geq 0$ for all $i,j$. By the definition of $\bfQ$, the sum of each row is $\|\bfq^i\|_1$ for some $i$. Thus $\|\bfq^i\|_1 = \sum_{j = 1}^n \frac{1}{r_i}Z_{ij} = 1$ by condition $(b)$. The sum of each column is 
\begin{align*}
\sum_{k = 1}^n Q_{kj} &=\! \sum_{i = 1}^m \sum_{k\in\mcI_i}  Q_{kj}\\
 &= \!\sum_{i = 1}^m \sum_{k\in\mcI_i} \frac{1}{r_i} Z_{ij} = \!\sum_{i = 1}^m Z_{ij} = 1,
\end{align*}
where the last equality is due to condition $(a)$.
\end{enumerate}
To summarize, for any given real matrix $\bfZ$ satisfying condition $(a)$-$(c)$ we can find a doubly stochastic matrix $\bfQ$ such that $\bfZ = \mpmX \bfQ$ for a particular multipermutation matrix $\mpmX$. This implies that $\bfZ$ is a convex combination of multipermutation matrices.

The converse is easy to verify by the definition of convex combinations and therefore is omitted.
\end{proof}

\section{LP-decodable multipermutation code}
\label{section.lpd_MP_code}
\subsection{Linearly constrained multipermutation matrices}
Using multipermutation matrices as defined in Definition~\ref{def.multipermutation_matrix}, we define the set of linearly constrained multipermutation matrices analogous to that in~\cite{wadayama2012lpdecodable}. Recall that we denote by $\mpmset(\bfr)$ the set of all multipermutation matrices parameterized by the multiplicity vector $\bfr = (r_1,\dots,r_m)$ and $n := \sum_{i = 1}^m r_i$.
\begin{definition}
\label{def.linearly_constrained_multipermutation_matrix}
Let $K$, $m$ and $n$ be positive integers. Assume that $\bfA \in \mathbb{Z}^{K \times (mn)}$, $\bfb \in \mathbb{Z}^K$, and let ``$\trianglelefteq
$'' represent ``$\leq$'' or ``$=$''. A set of linearly constrained multipermutation matrices is defined by 
\begin{equation}
\LCMM(\bfr,\bfA,\bfb,\trianglelefteq
) := \{\bfX \in \mpmset(\bfr) | \bfA \vect(\bfX) \trianglelefteq \bfb \},
\end{equation}
where $\bfr$ is the multiplicity vector.
\end{definition}

\begin{definition}
\label{def.LP_decodable_multiperm_code}
Let $K$, $m$ and $n$ be positive integers. Assume that $\bfA \in \mathbb{Z}^{K \times (mn)}$, $\bfb \in \mathbb{Z}^K$, and let ``$\trianglelefteq
$'' represent ``$\leq$'' or ``$=$''.
Suppose also that $\bft \in \real^m$ is given. The set of vectors $\LCMC(\bfr, \bfA, \bfb, \trianglelefteq, \bft)$ given by 
\begin{equation}
\label{eq.lcmc}
\LCMC(\bfr, \bfA,\bfb,\trianglelefteq
,\bft) := \{\bft \bfX \in \real^n | \bfX \in \LCMM(\bfr, \bfA, \bfb, \trianglelefteq)\}
\end{equation}
is called an LP-decodable multipermutation code.
\end{definition}

We can relax the integer constraints and form a \emph{code polytope}. Recall that $\multipermpolytope(\bfr)$ is the convex hull of all multipermutation matrices parameterized by $\bfr$.
\begin{definition}
\label{def.code_polytope}
The polytope $\mpcp(\bfr, \bfA,\bfb,\trianglelefteq)$ defined by
\begin{equation*}
\mpcp(\bfr, \bfA,\bfb,\trianglelefteq) := \multipermpolytope(\bfr) \bigcap \{\bfX \in \real^{m\times n} | \bfA \vect(\bfX) \trianglelefteq \bfb\}
\end{equation*}
is called the ``code polytope''. We note that $\mpcp(\bfr, \bfA,\bfb,\trianglelefteq)$ is a polytope because it is the intersection of two polytopes.
\end{definition}

We now briefly discuss these definitions and point out some key ingredients to notice.
\begin{itemize}
\item Definition~\ref{def.linearly_constrained_multipermutation_matrix} defines the set of multipermutation matrices. Due to Lemma~\ref{lemma.multiperm_one_to_one}, this set uniquely determines a set of multipermutations. The actual codeword that is transmitted (or stored in a memory system) is also determined by the initial vector $\bft$, where $\bft$ is determined by the modulation scheme used in the system. Definition~\ref{def.LP_decodable_multiperm_code} is  exactly the set of codewords after taking $\bft$ into account.
\item  Definition~\ref{def.code_polytope} is for decoding purposes. It will be discussed in detail in Section~\ref{section.lpdecoding}. As a preview, we note that we formulate an optimization problem with variables within the code polytope $\mpcp(\bfr, \bfA,\bfb,\trianglelefteq)$. In this optimization problem, the objective function is related to the initial vector $\bft$ but the constraints are not. Therefore $\mpcp(\bfr, \bfA,\bfb,\trianglelefteq)$ is not parameterized by $\bft$.
\item $\mpcp(\bfr, \bfA,\bfb,\trianglelefteq)$ is defined as the intersection between two polytopes. It is not defined as the convex hull of $\LCMC(\bfr, \bfA,\bfb,\trianglelefteq,\bft)$, which is usually harder to describe. However, this intersection may introduce fractional vertices, i.e. $\bfX \in \real^{m\times n}$ and $X_{ij} \in (0,1)$. Because of this, we call $\mpcp(\bfr, \bfA,\bfb,\trianglelefteq)$ a relaxation of $\conv(\LCMC(\bfr, \bfA,\bfb,\trianglelefteq,\bft))$. 
\end{itemize}
\subsection{Connections to LP-decodable permutation codes}
In this section, we discuss both LP-decodable permutation codes and LP-decodable multipermutation codes. We first show that both code construction frameworks are able to describe arbitrary codebooks. This means that neither definition reduces the code design space in terms of possible codebooks. In other words, they are able to achieve the capacity of multipermutation codes in any scenario when the optimal decoding scheme is used. In addition, we show that given an LP-decodable multipermutation code (defined by Definition~\ref{def.LP_decodable_multiperm_code}), there exist an LP-decodable permutation code (defined by Definition~\ref{def.LP_decodable_perm_code}) that produces the same codebook with the same number of linear constraints.

\begin{lemma}
\label{lemma.arbitrary_codebook}
Both the LP-decodable permutation code and the LP-decodable multipermutation code are able to describe arbitrary codebooks of multipermutations.
\end{lemma}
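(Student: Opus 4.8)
The plan is to show that any codebook $\mathcal{C}\subseteq\mpset(\bfr)$ is realizable in each framework: one can choose $(\bfA,\bfb,\trianglelefteq,\bft)$ with $\LCMC(\bfr,\bfA,\bfb,\trianglelefteq,\bft)=\mathcal{C}$, and $(\bfA',\bfb',\trianglelefteq',\bfs)$ with $\Lambda(\bfA',\bfb',\trianglelefteq',\bfs)=\mathcal{C}$. Both halves rest on one elementary fact: a finite set $S$ of $0/1$ points coincides with the set of all $0/1$ points of $\conv(S)$, and $\conv(S)$, being the convex hull of finitely many integer points, is a rational polytope, hence expressible as $\{\bfX:\bfA\vect(\bfX)\le\bfb\}$ for integral $\bfA,\bfb$ (take facet-defining inequalities and clear denominators). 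Throughout I will use only ``$\le$'' for ``$\trianglelefteq$''.

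For the multipermutation framework, I would fix any $\bft\in\real^m$ with distinct entries, say $\bft=(1,\dots,m)$. By Lemma~\ref{lemma.multiperm_one_to_one}, $\bfX\mapsto\bft\bfX$ maps $\mpmset(\bfr)$ bijectively onto $\mpset(\bfr)$, so $\mathcal{C}$ corresponds to the finite set $S=\{\bfX\in\mpmset(\bfr):\bft\bfX\in\mathcal{C}\}$ of multipermutation matrices. Pick integral $\bfA,\bfb$ with $\conv(S)=\{\bfX\in\real^{m\times n}:\bfA\vect(\bfX)\le\bfb\}$; I claim $\LCMM(\bfr,\bfA,\bfb,\le)=S$. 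The inclusion $S\subseteq\LCMM(\bfr,\bfA,\bfb,\le)$ is immediate, since $S\subseteq\mpmset(\bfr)$ and $S\subseteq\conv(S)$. For the reverse inclusion, take $\bfX\in\mpmset(\bfr)$ with $\bfA\vect(\bfX)\le\bfb$, i.e.\ $\bfX\in\conv(S)$, and write $\bfX=\sum_h\alpha_h\bfX_h$ with $\bfX_h\in S$, $\alpha_h>0$, $\sum_h\alpha_h=1$; at every position where $\bfX$ has a $0$ this forces each $\bfX_h$ to have a $0$, and at every position where $\bfX$ has a $1$ it forces each $\bfX_h$ to have a $1$, so $\bfX_h=\bfX$ for all $h$ and $\bfX\in S$. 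Applying $\bft$ and Lemma~\ref{lemma.multiperm_one_to_one} once more gives $\LCMC(\bfr,\bfA,\bfb,\le,\bft)=\{\bft\bfX:\bfX\in S\}=\mathcal{C}$.

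For the permutation framework, I would take $\bfs$ of the form~\eqref{eq.repeating_d} with multiplicity vector $\bfr$ and set $T=\{\bfPi\in\Pi_n:\bfs\bfPi\in\mathcal{C}\}$, a finite set of permutation matrices. Choosing integral $\bfA',\bfb'$ with $\conv(T)=\{\bfPi:\bfA'\vect(\bfPi)\le\bfb'\}$ and repeating the identical argument---now using that permutation matrices are $0/1$ matrices, so any permutation matrix lying in $\conv(T)$ belongs to $T$---yields $\Pi(\bfA',\bfb',\le)=T$, and hence $\Lambda(\bfA',\bfb',\le,\bfs)=\{\bfs\bfPi:\bfPi\in T\}=\mathcal{C}$.

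I do not expect a genuine obstacle. The only two points needing care are the integrality of the describing inequalities (handled by rationality of $\conv(S)$) and the entrywise comparison showing that a $0/1$ matrix in $\conv(S)$ already lies in $S$. The latter is simply the statement that multipermutation matrices, resp.\ permutation matrices, are vertices of $\multipermpolytope(\bfr)$, resp.\ of the Birkhoff polytope, so it is consistent with Theorem~\ref{theorem.multiperm_polytope} and Theorem~\ref{theorem.birkhoff}; one could also invoke those theorems directly in place of the entrywise argument.
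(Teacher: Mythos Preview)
Your proof is correct but takes a genuinely different route from the paper's.

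The paper proceeds by exclusion: for each multipermutation $\mpy\notin\mathcal{C}$ it writes down a single linear inequality $\tr(\mpmX^T(\bfE-\mpmY))\ge 1$ which, by Lemma~\ref{lemma.Hamming}, says $\hamming(\mpx,\mpy)\ge 1$ and thus removes exactly $\mpy$ from the feasible set. Stacking one such inequality per non-codeword (and, in the permutation framework, one per permutation matrix mapping to a non-codeword) yields the desired codebook. Your argument instead takes the convex hull of the codeword matrices $S$, invokes rationality of $\conv(S)$ to get integral facet inequalities, and then uses the entrywise $0/1$ argument to show that $\mpmset(\bfr)\cap\conv(S)=S$. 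The paper's construction is fully explicit and bounds the number of constraints by the number of non-codewords; your construction is non-constructive but has the pleasant side effect that the resulting code polytope $\mpcp(\bfr,\bfA,\bfb,\le)$ equals $\conv(S)$ exactly, so LP decoding over it would have no fractional vertices---a stronger conclusion than Lemma~\ref{lemma.arbitrary_codebook} actually requires.
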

\begin{proof}
Denote by $\codebook$ the desired codebook. Then for any multipermutation $\mpy \notin \codebook$, we can construct a linear constraint per Definition~\ref{def.LP_decodable_perm_code} and Definition~\ref{def.LP_decodable_multiperm_code} such that only $\mpy$ is excluded by this constraint. 

We first consider multipermutation matrices. Without loss of generality, we assume that the initial vector $\bft$ under consideration has distinct entries. Therefore there exists a unique $\mpmY$ such that $\mpy = \bft \mpmY$. We now construct the following linear constraint
\begin{equation}
\tr(\mpmX(\bfE - \mpmY)) \geq 1.
\end{equation}
By Lemma~\ref{lemma.Hamming}, this constraint implies that the Hamming distance between $\mpy$ and $\mpx$ has to be greater than or equal to $1$. Adding this constraint excludes only $\mpy$ from the codebook. Because $\mpy$ is the only vector at distance $0$.

Now consider permutation matrices and assume that the initial vector $\bfs$ contains duplicates. We can use the same exclusion method to eliminate all permutation matrices $\bfPi$ such that $\mpy = \bfs\bfPi$. 

From the arguments above, we know that for any codebook, we can exclude all non-codewords by adding linear constraints. This implies that both the LP-decodable permutation code and the LP-decodable multipermutation code are able to describe an arbitrary codebook. 
\end{proof}

We note that Lemma~\ref{lemma.arbitrary_codebook} does not guarantee that the codebook can be expressed efficiently using linear constraints. In Proposition~\ref{prop.transform_mpc_to_pc}, we show that once we have an LP-decodable multipermutation code, we can obtain an LP-decodable permutation code with the same number of constraints. 
\begin{proposition}
\label{prop.transform_mpc_to_pc}
Let $\bft$ be a length-$m$ initial vector. Let $\bfr$ be the multiplicity vector for $\bft$. Then for any $\bfA$ and $\bfb$ that defines an LP-decodable multipermutation code $\LCMC(\bfr, \bfA,\bfb,\trianglelefteq
,\bft)$, cf.~\eqref{eq.lcmc}, there exist an $\bfA'$ such that $\LCMC(\bfr, \bfA,\bfb,\trianglelefteq
,\bft) = \Lambda(\bfA',\bfb,\trianglelefteq
,\bfs)$, where $\bfs$ is obtained by repeating each entry of $\bft$ $r_i$ times (i.e. Eq.~\eqref{eq.repeating_d}).
\end{proposition}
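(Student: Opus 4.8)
The plan is to exhibit an explicit linear ``block-aggregation'' map from $n\times n$ permutation matrices to $m\times n$ multipermutation matrices and then push the constraint matrix $\bfA$ through this map. Let $\mbf{T}\in\{0,1\}^{m\times n}$ be the block-indicator matrix with $T_{ik}=1$ iff $k\in\mcI_i$ (so $\mbf{T}$ has row sums $r_i$ and column sums $1$). Two elementary facts drive the argument. First, $\bft\,\mbf{T}=\bfs$, since the $k$-th entry of $\bft\mbf{T}$ equals $t_i$ whenever $k\in\mcI_i$; hence $\bft(\mbf{T}\bfPi)=\bfs\bfPi$ for every $\bfPi$. Second, $\mbf{T}$ induces a (many-to-one) correspondence between $\Pi_n$ and $\mpmset(\bfr)$: for any $\bfPi\in\Pi_n$ the product $\mbf{T}\bfPi$ is binary with column sums $1$ and row sums $r_i$, i.e. $\mbf{T}\bfPi\in\mpmset(\bfr)$; conversely, every $\bfX\in\mpmset(\bfr)$ can be \emph{lifted} to some $\bfPi\in\Pi_n$ with $\mbf{T}\bfPi=\bfX$ by distributing, within each block $\mcI_i$, the $r_i$ ones of row $i$ of $\bfX$ bijectively among the $r_i$ rows indexed by $\mcI_i$.

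Next I would define $\bfA':=\bfA\,(\mbf{I}_n\otimes\mbf{T})$, using the vectorization identity $\vect(\mbf{T}\bfPi)=(\mbf{I}_n\otimes\mbf{T})\vect(\bfPi)$ valid for the column-stacking $\vect$. Since $\mbf{I}_n$ and $\mbf{T}$ are $0$--$1$ matrices and $\bfA\in\integer^{K\times(mn)}$, we get $\bfA'\in\integer^{K\times n^2}$, with the same number $K$ of rows as $\bfA$ --- this is the ``same number of linear constraints'' assertion. The key consequence is that for every $\bfPi\in\Pi_n$ we have $\bfA'\vect(\bfPi)=\bfA\,\vect(\mbf{T}\bfPi)$, so $\bfPi$ satisfies $\bfA'\vect(\bfPi)\trianglelefteq\bfb$ if and only if the multipermutation matrix $\mbf{T}\bfPi$ satisfies $\bfA\vect(\mbf{T}\bfPi)\trianglelefteq\bfb$.

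Finally I would verify the two set inclusions. If $\bfPi\in\Pi(\bfA',\bfb,\trianglelefteq)$, set $\bfX:=\mbf{T}\bfPi$; by the facts above $\bfX\in\mpmset(\bfr)$ and $\bfA\vect(\bfX)\trianglelefteq\bfb$, hence $\bfX\in\LCMM(\bfr,\bfA,\bfb,\trianglelefteq)$ and $\bfs\bfPi=\bft\bfX\in\LCMC(\bfr,\bfA,\bfb,\trianglelefteq,\bft)$. Conversely, given $\bfX\in\LCMM(\bfr,\bfA,\bfb,\trianglelefteq)$, lift it to $\bfPi\in\Pi_n$ with $\mbf{T}\bfPi=\bfX$; then $\bfA'\vect(\bfPi)=\bfA\vect(\bfX)\trianglelefteq\bfb$, so $\bfPi\in\Pi(\bfA',\bfb,\trianglelefteq)$ and $\bft\bfX=\bfs\bfPi\in\Lambda(\bfA',\bfb,\trianglelefteq,\bfs)$. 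Together these give $\LCMC(\bfr,\bfA,\bfb,\trianglelefteq,\bft)=\Lambda(\bfA',\bfb,\trianglelefteq,\bfs)$.

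The only genuinely nontrivial step is the lifting claim --- that every multipermutation matrix is $\mbf{T}\bfPi$ for some permutation matrix $\bfPi$ --- where one must check that the per-block bijective assignment yields a matrix with all row and column sums equal to $1$; everything else (the vectorization identity, integrality and row count of $\bfA'$, the two inclusions) is bookkeeping. It is worth flagging explicitly in the write-up that although $\bfPi\mapsto\bfs\bfPi$ is many-to-one when $\bfs$ has repeated entries, this causes no difficulty: all permutation matrices lifting a given $\bfX$ share the same image $\mbf{T}\bfPi=\bfX$, hence are simultaneously feasible or infeasible for $\bfA'$.
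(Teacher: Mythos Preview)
Your proposal is correct and follows essentially the same approach as the paper: the paper's row-by-row construction $a'_{kj}=a_{ij}$ for $k\in\mcI_i$ is exactly your $\bfA'=\bfA(\mbf{I}_n\otimes\mbf{T})$ written out entrywise, and the paper's block-by-block construction of $\bfP$ from $\mpmX$ is precisely your lifting step. The only difference is packaging---you use the Kronecker/vectorization identity to express the constraint transfer in one line, while the paper spells out the same computation index by index.
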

\begin{proof}
We first construct $\bfA'$ using $\bfA$. Then, we show that for every codeword in $\LCMC(\bfr, \bfA,\bfb,\trianglelefteq,\bft)$, it is also in $\Lambda(\bfA',\bfb \trianglelefteq,\bfs)$. Finally, we show the converse.

We construct $\bfA'$ row-by-row using $\bfA$. That is, row $l$ of $\bfA'$ is obtained from row $l$ of $\bfA$ for all $l$. We start by letting $\bfa$ be the first row of $\bfA$. Then the constraint induced by $\bfa$ is $\bfa \vect(\mpmX) \trianglelefteq b$, where $\mpmX$ is a multipermutation matrix and $b$ is the first value in $\bfb$. Since $\bfa$ is a length-$mn$ vector, we can relabel the entries of vector $\bfa$ by 
$$\bfa = (a_{11}, a_{21},\dots,a_{m1},a_{12},\dots,a_{m2},\dots,a_{1n},\dots,a_{mn})$$
Then $\bfa \vect(\mpmX) \trianglelefteq b$ can be rewritten as
\begin{equation}
\sum_{i \in \{1,\dots,m\},j \in \{1,\dots,n\}}a_{ij} \mpmXent_{ij} \trianglelefteq b
\end{equation}

Denote by $\bfa'$ the first row of $\bfA'$. We label entries of $\bfa'$ by 
$$\bfa' = (a'_{11}, a'_{21},\dots,a'_{n1},a'_{12},\dots,a'_{n2},\dots,a'_{1n},\dots,a'_{nn}).$$
Let $\mcI_i$ be the index set for symbol $i$. We construct $\bfa'$ by letting $a'_{kj} = a_{ij}$ for all $k \in \mcI_{i}$ and $j = 1,\dots,n$. In other words, the constraint $\bfa' \vect(\bfP) \trianglelefteq b$ is equivalent to
\begin{equation}
\label{eq.one_row_constraint}
\sum_{i \in \{1,\dots,m\},j \in \{1,\dots,n\}}a_{ij} \left(\sum_{k\in \mcI_{i}}P_{kj}\right) \trianglelefteq b
\end{equation}

For every row of $\bfA$, we repeat the above construction to obtain the corresponding row of $\bfA'$. 

Now we show that for each $\bfc \in \LCMC(\bfr, \bfA,\bfb,\trianglelefteq
,\bft)$, $\bfc \in \Lambda(\bfA',\bfb,\trianglelefteq
,\bfs)$, where $\bfA'$ is define by our definitions above. Since $\bfc \in \LCMC(\bfr, \bfA,\bfb,\trianglelefteq
,\bft)$, there exist a multipermutation matrix $\mpmX$ such that $\bfA \vect(\mpmX) \trianglelefteq \bfb$ and that $\bfc = \bft \mpmX$. We construct a permutation matrix as follows:
Let $\bfP$ be a $n\times n$ permutation matrix. We divide the rows of $\bfP$ into $m$ blocks $\bfB_{i}$, where each $\bfB_{i}$ is a $r_i \times n$ matrix that contains rows $k\in\mcI_i$ of $\bfP$. By Definition~\ref{def.multipermutation_matrix}, we know that the $i$-th row of $\mpmX$ contains exactly $r_i$ non-zero entries. We label these entries by $j_1,j_2,\dots,j_{r_i}$. In other words, $\mpmXent_{ij_{1}}= \cdots = \mpmXent_{ij_{r_i}} = 1$. Then, for all $l = 1,\dots,r_i$, we let the $l$-th row of $\bfB_{i}$ be a vector with a $1$ at the $j_l$-th entry. For example, if the $i$-th row of $\mpmX$ is $(1,0,1,0)$, then the $i$-th block of $\bfP$ is 
$$\bfB_i = \begin{pmatrix}
1 & 0 & 0 & 0\\
0 & 0 & 1 & 0\\
\end{pmatrix}.
$$

Then it is easy to verify that $\bfP$ is a permutation matrix, that $\bfP$ satisfies~\eqref{eq.one_row_constraint}, and that $\bfs \bfP = \bft \mpmX$. We omit the details here.

Finally, we show that for each $\bfc \in \Lambda(\bfA',\bfb,\trianglelefteq
,\bfs)$, $\bfc \in \LCMC(\bfr, \bfA,\bfb,\trianglelefteq
,\bft)$. Let $\bfP$ be the permutation matrix such that $\bfc = \bfs \bfP$ and that $\bfP\in \Pi(\bfA',\bfb,\trianglelefteq)$. We construct matrix $\mpmX$ by letting $\mpmXent_{ij} = \sum_{k\in \mcI_{i}}P_{kj}$. Then $\mpmX$ is the multipermutation matrix for vector $\bfc$ and $\mpmX \in \LCMM(\bfr, \bfA,\bfb,\trianglelefteq)$. Therefore $\bfc \in \LCMC(\bfr, \bfA,\bfb,\trianglelefteq, \bft)$.
\end{proof}

\subsection{Examples of LP-decodable multipermutation codes}
We provide two examples that leverage our code construction in Definition~\ref{def.LP_decodable_multiperm_code}. 
\begin{example}[Derangement] We say that a permutation $\pi$ is a derangement if $\pi_i \neq i$. For multipermutations, we can consider a generalized derangement defined as follows. Let $$\imath = (\underbrace{1, 1,\dots, 1}_{r_1},\underbrace{ 2, 2,\dots, 2}_{r_2},\dots,\underbrace{ m, m,\dots, m}_{r_m}).$$
Let $\mpx$ be a multipermutation obtained by permuting $\imath$. We say that $\mpx$ is a derangement if $\mpxent_i \neq \imath_i$ for all $i$.

In~\cite{wadayama2012lpdecodable}, the authors use Definition~\ref{def.LP_decodable_perm_code} to define such a set of permutations by letting $\tr(\bfP) = 0$, where $\bfP$ is a permutation matrix. We now extend this construction using Definition~\ref{def.LP_decodable_multiperm_code}. Using the same notations in Section~\ref{section.preliminary}, we let the linear constraints be
\begin{align*}
\sum_{j \in \mcI_i} X_{ij} = 0\text{ for all }i = 1,\dots,m.
\end{align*}
Suppose the initial vector $\bft = (1,2,\dots,m)$, then these constraints imply that symbol $i$ cannot appear at positions $\mcI_i$. For example, let $\bft = (1,2,3)$ and $\bfr = (2,2,2)$. Then the valid multipermutations are 
\begin{align*}
&(3, 3, 1, 1, 2, 2), (2, 2, 3, 3, 1, 1), (2, 3, 1, 3, 2, 1), \\
&(2, 3, 1, 3, 1, 2), (2, 3, 3, 1, 2, 1), (2, 3, 3, 1, 1, 2),\\
&(3, 2, 1, 3, 2, 1), (3, 2, 1, 3, 1, 2), (3, 2, 3, 1, 2, 1), \\
&(3, 2, 3, 1, 1, 2).
\end{align*}
\end{example}

\begin{example}
\label{example.code1}
In~\cite{shieh2010decoding}, the authors study multipermutation codes under the Chebyshev distance. The Chebyshev distance between two permutations (also multipermutations) $\mpx$ and $\mpy$ is defined as 
\begin{equation}
\chebyshev (\mpx, \mpy) = \max_{i} |\mpxent_i - \mpyent_i|.
\end{equation}
The following code construction is proposed in~\cite{shieh2010decoding}: 
\begin{definition}
\label{def.chebyshev_code}
Let $\bfr = (r,r,\dots,r)$ be a length-$m$ vector. Let $d$ be an integer such that $d$ divides $m$. We define
\begin{equation}
\codebook(r,m,d) = \{\bfx \in \mpset(\bfr)| \forall i \in \{1,\dots,mr\}, x_i \equiv i \bmod{d} \}.
\end{equation}
\end{definition}
This code has cardinality $(\frac{(ar)!}{(r!)^a})^d$ where $a = m/d$. It was shown in~\cite{shieh2010decoding} that the minimum Chebyshev distance of this code is $d$. Further, the rate of the code is observed to be relatively closer to a theoretical upper bound on the rate derived in~\cite{shieh2010decoding} when $r$ is larger. However no encoding or decoding algorithms are presented in~\cite{shieh2010decoding}.

In this paper, we first express the code construction using Definition~\ref{def.LP_decodable_multiperm_code}. Then, we derive a decoding algorithm in Section~\ref{section.lpdecoding}. While we have developed a tractable encoding algorithm, we do not present it due to space limitations and as it is not the main focus of this paper.

It is easy to verify that this code corresponds to the following linear constraints. 
\begin{itemize}
\item For all $j = 1,\dots,n$ and $i \nequiv j \bmod{d}$, $X_{ij} = 0$.  
\end{itemize}
As a concrete example, let $m = 6$, $r = 2$ and $d = 3$. Then the constraints are
\begin{align*}
X_{21} = X_{31} = X_{51} = X_{61} &= 0\\
X_{12} = X_{32} = X_{42} = X_{62} &= 0\\
\vdots& \\
X_{1, 12} = X_{2, 12} = X_{4, 12} = X_{5, 12} &= 0
\end{align*}
\end{example}

To summarize this section, we show how to construct codes using multipermutation matrices. Recall in Theorem~\ref{theorem.multiperm_polytope}, we characterize the convex hull of multipermutation matrices. We leverage this characterization in the next section to develop LP decoding algorithms.
\section{Channel model and LP decoding}
\label{section.lpdecoding}
In~\cite{wadayama2012lpdecodable}, the authors focused only on the AWGN channel. We first extend the LP decoding algorithm to arbitrary memoryless channels. The LP decoding objective is based on log-likelihood ratios, which is very similar to LP decoding of non-binary low-density parity-check codes proposed by Flanagan \emph{et al.} in~\cite{flanagan2009linearprogramming}. Next, we propose an LP decoding algorithm that minimizes the Chebyshev distance. 

Throughout this section, we will use $\bft$ to denote the initial vector. Without loss of generality, we assume that $\bft$ contains distinct entries. Thus the channel input space is $\cin = \{t_1,\dots,t_m\}$. 
\subsection{LP decoding for memoryless channels}
We first focus on memoryless channels. More formally, let $\cout$ be the output space of the channel. Let $\mpx$ be the transmitted multipermutation, which is a codeword from an LP-decodable multipermutation code. Let $\bfy$ be the received word. Then $\Pr_{\cout^n|\cin^n}(\bfy|\mpx) = \prod_{i = 1}^n \Pr_{\cout|\cin}(y_i|\mpxent_i)$. Under this assumption, we define a function $\LLRvect: \cout \mapsto \real^m$: $\LLRvect(y)$ is a length-$m$ row vector, each entry
$\LLRent_{i}(y)= \log \left(\frac{1}{\Pr(y|t_i)}\right)$. Further, we let $\LLRbig(\bfy) = (\LLRvect(y_1)^T|\dots|\LLRvect(y_n)^T)^T \in \real^{mn}$.

Then, maximum likelihood (ML) decoding can be written as
\begin{align*}
\hat{\mpx} &= \argmax_{\mpx \in \LCMC(\bfr,\bfA,\bfb,\trianglelefteq,\bft)} \Pr[\bfy|\mpx] \\
&= \argmax_{\mpx \in \LCMC(\bfr,\bfA,\bfb,\trianglelefteq,\bft)} \sum_{i = 1}^n \log \Pr[y_i|\mpxent_i]\\
&\eq^{\text{\scriptsize{(a)}}} \bft \left(\argmin_{\mpmX \in \LCMM(\bfr,\bfA,\bfb,\trianglelefteq)} \sum_{i = 1}^n \LLRvect(y_i) \mpmX^C_i\right) \\
&\eq^{\text{\scriptsize{(b)}}} \bft \left(\argmin_{\mpmX \in \LCMM(\bfr,\bfA,\bfb,\trianglelefteq)} \LLRbig(\bfy) \vect(\mpmX)\right),
\end{align*}
where $\mpmX^C_i$ is the $i$-th column of $\mpmX$. Equality (a) comes from the fact that for each $\mpx \in \LCMC(\bfr,\bfA,\bfb,\trianglelefteq,\bft)$ there exists an $\mpmX \in \LCMM(\bfr,\bfA,\bfb,\trianglelefteq)$ such that $\mpx = \bft \mpmX$. Further, since $\LLRvect(y_i) \mpmX^C_i =- \log \left(\Pr(y_i|t_i)\right)$, the maximization problem can be transformed to a minimization problem.
Equality (b) is simply a change to matrix notations. 

For this problem, we can relax the integer constraints $\LCMM(\bfr,\bfA,\bfb,\trianglelefteq)$ to linear constraints $\mpcp(\bfr,\bfA,\bfb,\trianglelefteq)$. Then the LP decoding problem is
\begin{equation}
\label{eq.LP_memoryless}
\begin{split}
\min \quad &  \LLRbig(\bfy) \vect(\bfX)\\ 
\st \quad & \bfX \in \mpcp(\bfr, \bfA,\bfb,\trianglelefteq)
\end{split}
\end{equation}

\begin{theorem}
The LP decoding problem~\eqref{eq.LP_memoryless} has an ML certificate. That is, whenever LP decoding outputs an integral solution, it is the ML solution.
\end{theorem}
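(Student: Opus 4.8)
The plan is to prove this "ML certificate" property by the standard LP-decoding argument: show that the feasible set of the relaxation, $\mpcp(\bfr,\bfA,\bfb,\trianglelefteq)$, has the property that all of its integral points are exactly the codewords (more precisely, the multipermutation matrices satisfying the constraints), and that the objective $\LLRbig(\bfy)\vect(\bfX)$ is precisely the (negative log-)likelihood objective of ML decoding when restricted to those integral points. Given these two facts, if the LP returns an integral optimum $\hat{\bfX}$, then $\hat{\bfX}$ is feasible for the original (non-relaxed) ML problem over $\LCMM(\bfr,\bfA,\bfb,\trianglelefteq)$, and it attains an objective value no larger than that of any point in the relaxation; since $\LCMM(\bfr,\bfA,\bfb,\trianglelefteq)\subseteq\mpcp(\bfr,\bfA,\bfb,\trianglelefteq)$, it in particular beats every other codeword's matrix, so $\bft\hat{\bfX}$ is an ML codeword.

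More concretely, I would proceed as follows. First I would recall from Definition~\ref{def.code_polytope} that $\mpcp(\bfr,\bfA,\bfb,\trianglelefteq) = \multipermpolytope(\bfr)\cap\{\bfX : \bfA\vect(\bfX)\trianglelefteq\bfb\}$, and invoke Theorem~\ref{theorem.multiperm_polytope}: the integral points of $\multipermpolytope(\bfr)$ are exactly the multipermutation matrices $\mpmset(\bfr)$ (the $\{0,1\}$ matrices with column sums $1$ and row sums $r_i$). Intersecting with the affine/polyhedral constraint set, the integral points of $\mpcp(\bfr,\bfA,\bfb,\trianglelefteq)$ are exactly $\LCMM(\bfr,\bfA,\bfb,\trianglelefteq)$. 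Next I would observe that the derivation preceding~\eqref{eq.LP_memoryless} already established, via equalities (a) and (b), that for $\mpmX\in\LCMM(\bfr,\bfA,\bfb,\trianglelefteq)$ we have $\LLRbig(\bfy)\vect(\mpmX) = -\sum_{i=1}^n\log\Pr[y_i\mid(\bft\mpmX)_i]$, so that minimizing the LP objective over $\LCMM$ is identical to ML decoding over $\LCMC$.

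Then I would close the argument: suppose the optimum $\hat{\bfX}$ of~\eqref{eq.LP_memoryless} is integral. By the first step, $\hat{\bfX}\in\LCMM(\bfr,\bfA,\bfb,\trianglelefteq)$, hence $\hat{\mpx} := \bft\hat{\bfX}$ is a genuine codeword of $\LCMC(\bfr,\bfA,\bfb,\trianglelefteq,\bft)$. For any other codeword $\mpx'=\bft\mpmX'$ with $\mpmX'\in\LCMM(\bfr,\bfA,\bfb,\trianglelefteq)$, we have $\mpmX'\in\mpcp(\bfr,\bfA,\bfb,\trianglelefteq)$, so by optimality of $\hat{\bfX}$ for the LP, $\LLRbig(\bfy)\vect(\hat{\bfX})\le\LLRbig(\bfy)\vect(\mpmX')$; translating back through step two, $\Pr[\bfy\mid\hat{\mpx}]\ge\Pr[\bfy\mid\mpx']$. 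Hence $\hat{\mpx}$ is an ML codeword, which is exactly the claimed ML certificate property.

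The only mildly delicate point — and the one I would be careful to state explicitly rather than grind through — is the identification of integral points of the intersection polytope with $\LCMM$: one must note that $\bfA$ and $\bfb$ are integral so that the constraint $\bfA\vect(\bfX)\trianglelefteq\bfb$ does not interfere with integrality, and that integrality together with membership in $\multipermpolytope(\bfr)$ forces $\bfX\in\mpmset(\bfr)$ by Theorem~\ref{theorem.multiperm_polytope} (the nonnegativity, unit column sums, and row sums $r_i$ of a $\{0,1\}$ matrix are precisely the defining conditions of a multipermutation matrix). Everything else is the routine "relaxation contains the true feasible set, so an integral LP optimum is a true optimum" observation; there is no real obstacle beyond bookkeeping.
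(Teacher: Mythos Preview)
Your proposal is correct and follows essentially the same route as the paper's proof: an integral LP optimum lies in $\LCMM(\bfr,\bfA,\bfb,\trianglelefteq)$, and since $\LCMM\subseteq\mpcp$ and the LP objective coincides with the negative log-likelihood on $\LCMM$, that optimum is an ML codeword. Your only superfluous remark is the caveat about $\bfA$ and $\bfb$ being integral; this plays no role, since any integral $\bfX\in\multipermpolytope(\bfr)$ is already in $\mpmset(\bfr)$ and membership in $\LCMM$ then follows directly from $\bfA\vect(\bfX)\trianglelefteq\bfb$, regardless of whether $\bfA,\bfb$ are integral.
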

\begin{proof}
Suppose that $\mpmX$ is the solution of the LP decoding problem and is integral. Then $\mpmX$ is a multipermutation matrix and $\bfA \vect(\mpmX) \trianglelefteq \bfb$. Therefore $\mpmX \in \LCMM(\bfr,\bfA,\bfb,\trianglelefteq)$. Since $\mpmX$ attains the maximum of the ML decoding objective, it is the ML solution.
\end{proof}
\subsubsection{The AWGN channel}
In the AWGN channel, $\Pr(y|t_i) = \frac{1}{\sqrt{2\pi}\sigma}e^{\frac{(y - t_i)^2}{2\sigma^2}}$, where $\sigma^2$ is the variance of the noise. Thus 
\begin{equation*}
\LLRbig(\bfy) = K(\underbrace{(y_1 - t_1)^2,\dots,(y_1 - t_m)^2}_{m}\\\dots \underbrace{(y_n - t_1)^2, \dots,(y_n - t_m)^2}_{m}),
\end{equation*}
 where $K >0$ is a scaling constant. Then $$\LLRbig(\bfy) \vect(\mpmX) = K\left(\sum_{i = 1}^n y_i^2 + \sum_{i = 1}^m r_i t_i^2 - 2 \bfu \vect(\mpmX)\right),$$ where $\bfu = (\underbrace{y_1 t_1,\dots, y_1 t_m}_{m}\dots\underbrace{y_n t_1, \dots, y_n t_m}_{m})$. Thus 
\begin{equation}
\argmin_{\bfX} \LLRbig(\bfy) \vect(\bfX) = \argmax_{\bfX} \tr((\bfy^T \bft)\bfX).
\end{equation}

We note that this formulation is the same as the LP decoding problem proposed in~\cite{wadayama2012lpdecodable}. We briefly restate the definition of pseudodistance in~\cite{wadayama2012lpdecodable} in the context of LP-decodable multipermutation codes and then state the upper bound for block error probability.
\begin{definition}
The function 
\begin{equation}
\omega_{\mathrm{AWGN}}(\mpmX, \hat{\mpmX}) =  \frac{\|\bft \mpmX \|_2^2 - \bft \hat{\mpmX} (\bft \mpmX)^T}{ \|\bft \hat{\mpmX} - \bft\mpmX\|_2}
\end{equation}
is called the pseudodistance where $\mpmX, \hat{\mpmX} \in \multipermpolytope(\bfr)$.
\end{definition}
\begin{proposition}
Let $\mathcal{V}$ be the set of fractional vertices for the code polytope $\mpcp(\bfr, \bfA,\bfb,\trianglelefteq)$. Suppose a codeword $\bft \mpmX$ is transmitted through the AWGN channel with noise variance $\sigma^2$. Then the block error probability is upper bounded by 
\begin{equation}
P_{\mathrm{error}} \leq \sum_{\hat{\mpmX} \notin \mathcal{V}\setminus \{\mpmX\}}Q\left(\frac{1}{\sigma}\omega_{\mathrm{AWGN}}(\mpmX, \hat{\mpmX})
\right),
\end{equation}
where $Q(\cdot)$ is the tail probability of the standard normal distribution.
\end{proposition}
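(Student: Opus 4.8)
The plan is to carry out the standard pairwise-error-probability / union-bound argument for linear-programming decoding, specialised to the AWGN objective. The main structural observation is that the objective of~\eqref{eq.LP_memoryless} is linear in $\bfX$, so for the AWGN channel, where minimising $\LLRbig(\bfy)\vect(\bfX)$ is the same as maximising $\tr((\bfy^{T}\bft)\bfX)=\langle\bfy,\bft\bfX\rangle$, LP decoding returns the transmitted codeword $\bft\mpmX$ precisely when $\mpmX$ is the \emph{unique} maximiser of $\langle\bfy,\bft\bfX\rangle$ over the vertices of the code polytope $\mpcp(\bfr,\bfA,\bfb,\trianglelefteq)$. By Theorem~\ref{theorem.multiperm_polytope} and Definition~\ref{def.code_polytope} this vertex set is finite and consists of the integral (codeword) matrices in $\LCMM(\bfr,\bfA,\bfb,\trianglelefteq)$ together with the fractional vertices in $\mathcal{V}$; moreover $\mpmX$ is itself a vertex, being an extreme point of $\multipermpolytope(\bfr)$ that lies in $\mpcp(\bfr,\bfA,\bfb,\trianglelefteq)$. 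Consequently a decoding error forces $\langle\bfy,\bft\hat\mpmX-\bft\mpmX\rangle\geq 0$ for at least one vertex $\hat\mpmX\neq\mpmX$, and a union bound over this finite collection gives $P_{\mathrm{error}}\leq\sum_{\hat\mpmX\neq\mpmX}\Pr[\langle\bfy,\bft\hat\mpmX-\bft\mpmX\rangle\geq 0]$, the sum ranging over all vertices of $\mpcp(\bfr,\bfA,\bfb,\trianglelefteq)$ other than $\mpmX$.

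The second step evaluates each pairwise term. I would write the received vector as $\bfy=\bft\mpmX+\bfe$, where the $n$ coordinates of $\bfe$ are i.i.d.\ $\mathcal{N}(0,\sigma^{2})$. Substituting and cancelling the deterministic inner product, the event $\langle\bfy,\bft\hat\mpmX-\bft\mpmX\rangle\geq 0$ becomes $\langle\bfe,\bft\hat\mpmX-\bft\mpmX\rangle\geq\|\bft\mpmX\|_{2}^{2}-\bft\hat\mpmX(\bft\mpmX)^{T}$. The left-hand side is a scalar Gaussian with mean $0$ and variance $\sigma^{2}\|\bft\hat\mpmX-\bft\mpmX\|_{2}^{2}$, so the probability it exceeds the right-hand threshold equals $Q\!\big(\,(\|\bft\mpmX\|_{2}^{2}-\bft\hat\mpmX(\bft\mpmX)^{T})/(\sigma\|\bft\hat\mpmX-\bft\mpmX\|_{2})\,\big)=Q\!\big(\tfrac{1}{\sigma}\,\omega_{\mathrm{AWGN}}(\mpmX,\hat\mpmX)\big)$, which is exactly the pseudodistance. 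Summing over all vertices $\hat\mpmX\neq\mpmX$ then gives the claimed bound.

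Most of this is routine: one algebraic rewriting of the AWGN objective in inner-product form, one one-dimensional Gaussian tail computation, and a union bound over finitely many terms. The points I would spell out carefully are: (i) the reduction from ``compare $\mpmX$ against every feasible point of $\mpcp(\bfr,\bfA,\bfb,\trianglelefteq)$'' to ``compare $\mpmX$ against the finitely many vertices,'' which relies on linearity of the objective together with the polyhedral description supplied by Theorem~\ref{theorem.multiperm_polytope}; and (ii) the degenerate events, namely a tie $\langle\bfy,\bft\hat\mpmX-\bft\mpmX\rangle=0$, which has probability zero under the continuous noise, and a non-positive threshold, which still yields a valid (in fact exact) pairwise bound because $Q$ is defined on all of $\real$. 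The only genuine subtlety I anticipate is the well-definedness of $\omega_{\mathrm{AWGN}}(\mpmX,\hat\mpmX)$: it requires $\bft\hat\mpmX\neq\bft\mpmX$, which for integral vertices is guaranteed by Lemma~\ref{lemma.multiperm_one_to_one}, but which for the fractional vertices in $\mathcal{V}$ must either be taken as a nondegeneracy hypothesis or argued for directly.
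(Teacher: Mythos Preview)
Your proposal is correct and is exactly the standard union-bound/pairwise-error computation that the paper defers to: the paper gives no argument of its own, simply stating that the proof is identical to that of Lemma~1 in~\cite{wadayama2012lpdecodable}, which proceeds precisely along the lines you outline (linearity of the objective reduces to vertices, then a Gaussian tail evaluation of each pairwise event). Your added remarks on ties and on the nondegeneracy needed for $\omega_{\mathrm{AWGN}}$ to be well defined are more careful than either source.
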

\begin{proof}
We omit the proof because it is identical to the proof of Lemma~1 in~\cite{wadayama2012lpdecodable}.
\end{proof}

\subsubsection{Discrete memoryless $q$-ary symmetric channel}
Without loss of generality, we assume that the channel output space is the same as the input space. Namely, $\cin = \cout = \{1,\dots,m\}$. The transition probabilities are given by
\begin{equation*}
\Pr(y|x) = \begin{cases}
1 -p & \text{ if } y = x\\
\frac{p}{m - 1} & \text{ otherwise.}
\end{cases}
\end{equation*}
Let $\bfe(y)$ be a row vector such that $e(y)_i = 0$ if $i\neq y $ and $e(y)_i = 1$ if $i = y$. Further, we denote by $\bfY$ the matrix
\begin{equation*}
\bfY = [\bfe(y_1)^T | \bfe(y_2)^T | \dots | \bfe(y_n)^T].
\end{equation*}
Using these notations,
\begin{equation*}
\LLRvect(\bfy)  = \log\left(\frac{m-1}{p}\right) \bfone +  \log\left(\frac{1}{1-p}\cdot\frac{p}{m-1}\right)\bfe(y_i).
\end{equation*}
Then, 
\begin{equation*}
\LLRbig(\bfy) \vect(\mpmX) = \tr\left(\log\left(\frac{m-1}{p}\right)\bfE^T \mpmX \right.\\ \left. +\log\left(\frac{1}{1-p}\cdot\frac{p}{m-1}\right) \bfY^T \mpmX \right),
\end{equation*}
where $\bfE$ is an $m\times n$ matrix with all entries equal to one. Note that $\tr(\bfE^T \mpmX) = n$ is a constant and $\log\left(\frac{1}{1-p}\cdot\frac{p}{m-1}\right)$ is a negative constant. Therefore 
\begin{equation}
\argmin_{\bfX} \LLRbig(\bfy) \vect(\bfX) = \argmax_{\bfX} \tr(\bfY^T\bfX).
\end{equation}
Note that this is equivalent to minimizing the Hamming distance between $\bfX$ and $\bfY$ (cf. Lemma~\ref{lemma.Hamming}).

\subsection{LP decoding for the Chebyshev distance}
Permutations codes under the Chebyshev distance are proposed by Klove \emph{et al.} in~\cite{klove2010permutation} for flash memories. Its extension to multipermutations is introduced in~\cite{shieh2010decoding}. 
We first express the minimization of the Chebyshev distance using a linear program. We then provide a decoding example for the code construction in Example~\ref{example.code1}.

The decoding problem that minimizes the Chebyshev distance can be written as the following optimization problem:
\begin{align*}
\min \quad & \max_{i} |\mpxent_i - \mpyent_i| \\
\st \quad &  \mpx \in \LCMC(\bfr,\bfA,\bfb,\trianglelefteq,\bft)
\end{align*}
We can introduce a auxiliary variable $\delta$ and rewrite the problem as
\begin{align*}
\min \quad & \delta  \\ \st \quad &  \mpx \in \LCMC(\bfr,\bfA,\bfb,\trianglelefteq,\bft),\\
	& -\delta \leq \mpxent_i - \mpyent_i \leq \delta\text{ for all }i.
\end{align*}
Note that $\mpx = \bft \mpmX$, where $\mpmX \in \LCMM(\bfr,\bfA,\bfb,\trianglelefteq)$, and thus the problem can be reformulated as
\begin{align*}
\min \quad &  \delta  \\
\st \quad &  \mpmX \in \LCMM(\bfr,\bfA,\bfb,\trianglelefteq),\\
	& -\bftelta \leq \bft \mpmX - \mpy \leq \bftelta,
\end{align*}
where $\bftelta := (\delta, \delta,\dots,\delta)$ is a length-$n$ vector. Now we relax the problem to an LP
\begin{equation}
\label{eq.lp_chebyshev}
\begin{split}
\min \quad  & \delta  \\
\st \quad &  \mpmX \in \mpcp(\bfr, \bfA,\bfb,\trianglelefteq), \\ 
	& -\bftelta \leq \bft \mpmX - \mpy \leq \bftelta.
\end{split}
\end{equation}
We observe that the solution of LP decoding problem above is usually not unique. Therefore we adopt a simple rounding heuristic to obtain the final decoding result: Let $\hat{x}_j = \argmax_i X_{ij}$ for all $j = 1,\dots,n$. 

\begin{example}
We consider the code construction with the same parameters in Example~\ref{example.code1}. This code has minimum (Chebyshev) distance $3$. Therefore the optimal decoding scheme can correct $1$ error. We let the transmitted codeword be $\mpx = (1,2,3,4,5,6,1,2,3,4,5,6)$ and the received word be $\mpy = (2,1,4,3,6,5,2,1,4,3,6,5)$. The Chebyshev distance between $\mpx$ and $\mpy$ is $\chebyshev(\mpx, \mpy) = 1$. We also note that the Hamming distance is $\hamming(\mpx,\mpy) = 12$ and the Kendall tau distance is $\kendall(\mpx, \mpy) = 6$. 

We solve the LP problem~\eqref{eq.lp_chebyshev} using CVX~\cite{cvx}. The solution we obtain is $X_{11} = X_{33} = X_{44} = X_{66} = X_{17} = X_{39} = X_{4,10} = X_{6,12}= 0.5825$, $X_{41} = X_{63} = X_{14} = X_{36} = X_{47} = X_{69} = X_{1,10} = X_{3,12} = 0.4175$, $X_{22} = X_{55}= X_{28} = X_{5,11} = 1$ and the rest entries are zero. The minimum value for $\delta$ is $1$. Let $\hat{x}_j = \argmax_i X_{ij}$ for all $j = 1,\dots,n$. Then $\hat{\bfx}= (1,2,3,4,5,6,1,2,3,4,5,6)$.
\end{example}

\section{Conclusions}
In this paper, we develop several fundamental tools of a new multipermutation code  framework. We first propose representing multipermutations using binary matrices that we term multipermutation matrices. In order to apply LP decoding, we characterize the convex hull of multipermutation matrices. This characterization is analogous to the Birkhoff polytope of permutation matrices. Using this characterization, we formulate two LP decoding problems. The first LP decoding problem is based on minimizing the ML decoding objective. It applies to arbitrary memoryless channel. The second LP decoding problem is based on minimizing the Chebyshev distance. We demonstrate via an example that this LP decoding algorithm can be used to decode a code designed for the Chebyshev distance.


\end{document}